\documentclass[a4paper]{lipics-v2021}

\usepackage{graphicx,amsthm,amsmath,amssymb,url,hyperref,bbm}
\nolinenumbers

\title{Model Checking for Noisy Toric Words}

\author{Mihir Vahanwala}{MPI-SWS, Saarland Informatics Campus }{mvahanwa@mpi-sws.org}{https://orcid.org/0009-0008-5709-899X}{}

\authorrunning{Mihir Vahanwala}

\Copyright{Mihir Vahanwala}

\ccsdesc[500]{Theory of computation~Logic and verification}

\keywords{Model checking, dynamical systems, almost-periodic words, linear recurrence sequences, prefix-independent languages}  

\begin{document}

\maketitle
\begin{abstract}
    Toric words have recently gained attention as a means to model the behaviour of dynamical systems such as linear loops. This work is concerned with describing when the toric model is precise enough to decide whether the trace of a system is contained in an $\omega$-regular language. Our contributions are threefold. (i) We use topological means to generalise a class of toric words by defining noise-robust dynamical systems and their sets of noisy traces. We prove that all noisy traces of a system are almost-periodic, and have the same set of recurrent factors. (ii) We apply the abstract techniques above to concrete sequences obtained as sign descriptions of real algebraic linear recurrence sequences (LRS), and show that the language-membership problem for sign descriptions of LRS with few dominant roots is decidable. (iii) We show that noisy traces of noise-robust dynamical systems are indistinguishable to a prefix-independent $\omega$-regular language, and inclusion is decidable provided the common language of their recurrent factors is recursive. We incidentally obtain properties of prefix-independent $\omega$-regular languages that may be of independent interest.
\end{abstract}

\section{Introduction}

A toric word $x \in \Sigma^\omega$ is obtained by considering a collection $\{S_1, \ldots, S_k\}$ of finitely many disjoint open subsets of a torus $X$, a starting point $\alpha \in X$, and a rotation $\gamma$ such that for every $n$, the point $\alpha + n\gamma$ is contained in (the closure of) one of the open sets above. A coding $\sigma$ then associates each open set with a letter in $\Sigma$, and we define $x(n) = \sigma(S_n)$ where $\alpha + n\gamma \in \mathsf{cl}({S_n})$. Toric words are extensively studied symbolic dynamical systems (see \cite[Chap.~1]{walters2000introduction} and \cite{weyl1916gleichverteilung}). A prime example is the class of Sturmian words, where $X$ is the unit circle identified by the interval $[0, 1)$, and is partitioned as $[0, \gamma), [\gamma, 1)$. 

More recently, toric words have been identified to accurately capture the behaviour of \emph{linear loops} in program verification \cite{Karimov2023}. They also play a central role in the decidability of the monadic second-order (MSO) theory of the structure ${\langle \mathbb{N}; <, a_1^\mathbb{N}, \ldots, a_k^\mathbb{N}\rangle}$ as established by \cite{berthe2024mso}. We refer the reader to \cite{berthe2025toric} for a survey. Such works hinge on the fact that toric words, and more generally, traces of \emph{minimal compact dynamical systems}, enjoy the combinatorial property of \emph{uniform recurrence}, which is a special case of \emph{almost periodicity}. 

These properties are defined and leveraged as follows. We say that a finite word $u$ is a \emph{factor} of an infinite word $x$ if $u$ occurs in $x$; if $u$ occurs infinitely often, we say that it is a \emph{recurrent} factor of $x$. A word $x$ is recurrent when every factor $u$ is recurrent; if moreover for every $u$ there is a bound on the gaps between consecutive occurrences of $u$, then $x$ is uniformly recurrent. A word $x$ is almost-periodic if for every recurrent factor $u$, there is a bound on the gaps between its consecutive occurrences. We have that $x$ is \emph{effectively} almost-periodic if, furthermore: (i) given any index $n$, we can compute $x(n)$; (ii) given any finite word $u$, we can compute $R_u$ such that either all occurrences of $u$ are in a prefix of length $R_u$, or every factor of length $R_u$ contains an occurrence of $u$. A classic result of Sem\"enov concerning runs of deterministic automata on almost-periodic words (see e.g., \cite{semenov-english}) implies that given an $\omega$-regular language $L$ and an effectively almost-periodic word $x$, one can decide whether $x \in L$.

Indeed, throughout this paper, we characterise $\omega$-regular languages as those recognised by deterministic parity automata $\mathcal{A} = (Q, \Sigma, q_\mathsf{init}, \tau, \mathsf{col})$, where $Q$ is a set of states, $\Sigma$ is the alphabet, $q_\mathsf{init}$ is the initial state, $\tau: Q \times \Sigma \rightarrow Q$ is the transition function, and $\mathsf{col}: Q \rightarrow \mathbb{N}$ is the colouring function (note that its image is finite). The automaton $\mathcal{A}$ accepts a word $x$ if the maximum colour visited infinitely often in the run is even.

The motivation of this work lies in the observation that the model obtained by using tori to describe dynamical systems, while accurate, may still be imprecise. As an example \cite{berthe2025toric}, consider the sequence $x$ over the alphabet $\{-1, 0, 1\}$ obtained by recording the sign of the expression
$$
\sin(\varphi + n\theta) + r(n),
$$
where $\theta/\pi$ is irrational, and $r(n)$ converges to $0$. Such an infinite word arises as the sign pattern of a \emph{linear recurrence sequence (LRS}, see Sec.~\ref{sec:LRS}) with two \emph{dominant characteristic roots}. Intuitively, as $n$ grows larger, the sign pattern $x$ should bear increasing resemblance to the toric word $y$ obtained by taking $X$ to be the unit circle, $\alpha$ to be starting point with angular coordinate $\varphi$, the update $\gamma$ to be rotation by angle $\theta$, and the partition of $X$ to be the union of open semicircular arcs and their endpoints. However, the indices at which $x$ and $y$ disagree can still constitute an infinite set, and the sign pattern $x$ can in fact be provably non-toric \cite[Cor.~6.9]{berthe2025toric}.

The question that arises is: under what circumstances can the language-membership problem for the original trace be solved using the model torus?

In this example, we can in fact prove that $x$ will be almost periodic, and moreover the set of recurrent factors of $x$ will be the same as that of $y$. The dynamical system induced by the orbit of an irrational rotation on a unit circle partitioned into semicircular arcs is \emph{robust} in the sense that varying the starting point and adding diminishing \emph{noise} do not alter the almost-periodic nature of the trace, nor change the set of its recurrent factors. 

The first contribution of this paper is a generalisation of the above phenomenon. We shall use topological means to define a class of noise-robust dynamical systems and their sets of noisy traces (Def.~\ref{def:robust}). We then prove in Thm.~\ref{thm:noise-robust} that all noisy traces of a robust system are almost-periodic and have the same set of recurrent factors. Thm.~\ref{thm:eap-ingredients} then delineates when noisy traces are \emph{effectively} almost-periodic, and thus gives the ingredients to address our motivating question above.

Our second contribution is an application to LRS: we combine our results (Thm.~\ref{thm:eap-ingredients}) with the techniques of \cite{powerofpos}. In Thm.~\ref{thm:LRS}, we show that the sign pattern $x$ of a real algebraic LRS is \emph{effectively} almost periodic provided it satisfies one among a handful of spectral conditions. By a direct application of Sem\"enov's result, given such an LRS and an $\omega$-regular language $L$, we can decide whether $x \in L$. In particular, Case (2) of Thm.~\ref{thm:LRS} covers our running example. We note that this case also applies to \emph{non-simple} LRS; prior work in this setting \cite{prefixind,powerofpos} is restricted to \emph{simple} LRS where it can be shown by number-theoretic means that the sign pattern differs from the toric word in only finitely many positions.

Our third contribution is the result that noisy traces of a noise-robust dynamical system are indistinguishable to any \emph{prefix-independent} $\omega$-regular language, and inclusion is decidable provided the common language $L_\star$ of recurrent factors is recursive (Cor.~\ref{cor:modelcheck}). 
A prefix-independent language \cite{prefixind,prefixind-cpec} $L \subseteq \Sigma^\omega$ has the following defining property. Let $x, x'$ be words such that $x'$ can be obtained from $x$ by making only finitely many edits. We have that $x \in L$ if and only if $x' \in L$. Non-empty prefix-independent languages thus comprise a special class of liveness properties; we also give properties of prefix-independent $\omega$-regular languages (Props.~\ref{prop:prefix-ind-buchi}, \ref{prop:disjunctive}) that may be of independent interest. E.g., the uniform Borel probability measure assigns a prefix-independent language measure either $0$ or $1$.

Finally, we shall observe that Sturmian words do not satisfy our robustness criteria. We shall demonstrate noisy traces of Sturmian dynamical systems that are not almost-periodic, and also show that querying such traces against prefix-independent $\omega$-regular languages subsume Diophantine-hard problems.

\section{Noise-Robust Dynamical Systems}
In this section, we compile a set of properties that imply a dynamical system $(X, T: X \rightarrow X)$ is robust. As outlined in the introduction, we seek to generalise the phenomenon observed for the system defined by an irrational rotation on the $1$-torus partitioned into semicircular arcs, and the reader may use this as a running example to intuit the properties we impose and the conclusions we draw.

Throughout this paper, we shall assume that $X$ is a compact metric space with bounded metric $d$ such that every open ball $B \subseteq X$ is connected. We shall assume that the dynamical update $T: X \rightarrow X$ is a homeomorphism, and in particular, is bijective. This has the following consequences.
\begin{enumerate}
    \item Both $T$ and $T^{-1}$ are homeomorphisms, and hence for every open set $S$, the image $TS$ and pre-image $T^{-1}S$ are open sets.
    \item Both $T$ and $T^{-1}$ are uniformly continuous because $X$ is compact. In particular, we can define a modulus of continuity, i.e., a non-decreasing function $\Omega: \mathbb{R}_{\ge 0} \rightarrow  \mathbb{R}_{\ge 0}$, such that for all $\alpha, \alpha' \in X$ we have $d(T\alpha, T\alpha') < \Omega(d(\alpha, \alpha'))$, $d(T^{-1}\alpha, T^{-1}\alpha') < \Omega(d(\alpha, \alpha'))$, the function $\Omega$ is continuous at $0$, and $\Omega(0) = 0$. 
    \footnote{
    For any $\delta \ge 0$, the set $B_\delta = \{(\alpha, \alpha') \in X \times X: d(\alpha, \alpha') \le \delta\} $ 
    is closed in the compact $X \times X$, and hence compact. Take $\Omega(\delta) = 2 \cdot \max_{(\alpha, \alpha') \in B_\delta} \max(d(T\alpha, T\alpha'), d(T^{-1}\alpha, T^{-1}\alpha'))$.} 
\end{enumerate}

Our final requirement on the dynamics is that that the dynamical system $(X, T)$ be minimal, i.e., for all $\alpha \in X$ the orbit $\{T^n \alpha: n \in \mathbb{N}\}$ is dense in $X$.

\subsection{Partitions for the coding}

Having set up the dynamics, we turn to the requirements on how the coding of an orbit will be obtained. We partition $X$ into finitely many disjoint open sets $S_1, \ldots, S_k$, and a closed set $Z$ such that $\bigcup_{i=1}^k S_i$ is dense in $X$. Observe that $Z$ is compact. We impose the independent discrete orbit condition: for all $\zeta \in Z$, and all $n \ge 1$, we must have that $T^n \zeta \notin Z$. This condition is technically helpful, and is reminiscent of its counterpart which is now a staple working assumption in the study of \emph{interval exchange transformations} \cite{keane-iet}.

A coding $\sigma$ surjectively maps $X$ to a finite alphabet $\Sigma$ in an ``almost continuous way'': we have that $\sigma$ is constant on each of the open sets $S_1, \ldots, S_k$, and that if $\sigma(\zeta) = b$ for some $\zeta \in Z$, then for every $\delta > 0$, there exists $\alpha \notin Z$ such that $d(\alpha, \zeta) < \delta$ and $\sigma(\alpha) = b$.

We model noise as follows. Let $\mathcal{N} = \{\nu: \mathbb{N} \rightarrow \mathbb{R}_{\ge 0} \mid \lim_{n\rightarrow \infty} \nu(n) = 0\}$ be the set of noise functions. We say that the noise function $\nu$ \emph{converges effectively} if for every $\delta$ we can compute $N$ such that for all $n \ge N$ we have $\nu(n) < \delta$. The function $\mathsf{traces}_\sigma: X \times \mathcal{N} \rightarrow 2^{\Sigma^\omega}$ maps a starting point $\alpha$ and noise function $\nu$ to the set of infinite words 
\begin{equation}
  \mathsf{traces}_\sigma(\alpha, \nu) = \{x \in \Sigma^\omega \mid \forall n. \exists \beta. ~ d(\beta, T^n \alpha) < \nu(n) \land \sigma(\beta) = x(n)\}.  
\end{equation}

In other words, a noisy trace may code a point $\beta$ in the $\nu(n)$-ball around $T^n\alpha$ instead of $T^n \alpha$ itself. By our requirements on the coding $\sigma$, we can assume without losing generality that $\beta \notin Z$. Finally, we define 
\begin{equation}
    \mathsf{Traces}_\sigma = \bigcup_{\alpha \in X, \nu \in \mathcal{N}} \mathsf{traces}_\sigma(\alpha, \nu).
\end{equation}

We summarise our conditions in the following definition.

\begin{definition}
\label{def:robust}
    A dynamical system $(X, T, \langle S_1, \ldots, S_k, Z \rangle, \sigma)$, where $X$ is partitioned into $S_1, \ldots, S_k, Z$, and $\sigma: X \rightarrow \Sigma$ is a coding, is said to be noise-robust if:
    \begin{enumerate}
        \item The space $X$ is compact and admits a metric $d$, and moreover, every open ball $B \subseteq X$ is connected.
        \item The update $T: X \rightarrow X$ is a homeomorphism, admits a modulus of continuity $\Omega$, and induces a minimal dynamical system on $X$, i.e., for all $\alpha \in X$, the orbit $\{T^n \alpha \mid n \in \mathbb{N}\}$ is dense in $X$.
        \item The partitions $S_1, \ldots, S_k$ are disjoint open sets, and their union is dense in $X$.
        \item For all $\zeta \in Z$, and all $n \ge 1$, we have $T^n \zeta \notin Z$.
    \end{enumerate}
\end{definition}

It is straightforward to check that our running example of an irrational rotation on the $1$-torus partitioned into open semicircular arcs and their endpoints meets all the four conditions. On the other hand, toric systems defining Sturmian words satisfy all but the fourth condition: the distance between the endpoints of the intervals is the same as the angle of rotation.

\subsection{Noise Robustness}
\begin{theorem}
\label{thm:noise-robust}
  Let $(X, T, \langle S_1, \ldots, S_k, Z \rangle, \sigma)$ be a noise-robust dynamical system. We have that every $x \in \mathsf{Traces}_\sigma$ is almost-periodic. Furthermore, there exists $L_\star \subseteq \Sigma^+$ such that for every $x \in \mathsf{Traces}_\sigma$, the set of recurrent factors of $x$ is $L_\star$.
\end{theorem}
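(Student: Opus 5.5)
The plan is to define $L_\star$ intrinsically, from the open dynamics, and then prove two inclusions for an arbitrary noisy trace $x \in \mathsf{traces}_\sigma(\alpha,\nu)$. First, every $u \in L_\star$ occurs in $x$ with bounded gaps. Second, every factor of $x$ occurring beyond some index lies in $L_\star$. Together these give that the set of recurrent factors of $x$ is exactly $L_\star$, and that each recurrent factor has bounded gaps, i.e.\ $x$ is almost-periodic.

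\emph{Definition of $L_\star$.} Let $u = u_0\cdots u_{m-1}$. We put $u \in L_\star$ iff there are indices $j_0,\dots,j_{m-1}$ with $\sigma(S_{j_i}) = u_i$ such that
$$U_u = \bigcap_{i<m} T^{-i} S_{j_i} \neq \emptyset.$$
This set is open since $T$ is a homeomorphism.

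\emph{Every $u\in L_\star$ recurs with bounded gaps.}
\begin{enumerate}
\item Pick a nonempty open $V$ and $\eta>0$ such that the $\eta$-neighbourhood of $T^i V$ lies inside $S_{j_i}$ for each $i<m$. This uses uniform continuity, via $\Omega$, applied to a small ball inside $U_u$.
\item By minimality every orbit enters $V$, so $X = \bigcup_{j\ge 0} T^{-j}V$. By compactness finitely many of these sets suffice, say those with $j < R$.
\item Hence every orbit segment $T^n\alpha,\dots,T^{n+R-1}\alpha$ meets $V$.
\item Once $\nu(n) < \eta$ for all large $n$, any hit $T^n\alpha \in V$ forces every $\beta$ with $d(\beta,T^{n+i}\alpha)<\nu(n+i)$ to lie in $S_{j_i}$. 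So $x$ reads $u$ at position $n$.
\end{enumerate}

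\emph{Late factors of $x$ lie in $L_\star$.} This is the main obstacle, because noise near $Z$ allows ambiguous letters.

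First I would prove a separation lemma: for each $m$ there is $\varepsilon_m>0$ such that for every $y\in X$, at most one of $y, Ty, \dots, T^{m-1}y$ lies within $\varepsilon_m$ of $Z$. By condition 4, the compact sets $Z$ and $T^jZ$ are disjoint for $1\le j<m$, so they are at positive distance. Uniform continuity of $T^{\pm j}$ then transfers this separation to $\varepsilon$-neighbourhoods.

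Now take $n$ large, so that the noise is far below $\varepsilon_m$ on the window of length $m$ starting at $n$. Consider two cases.
\begin{enumerate}
\item \emph{No point of the window is near $Z$.} Each noise ball is a connected open set disjoint from $Z$, hence contained in the union of the disjoint open sets $S_1,\dots,S_k$. By connectedness it lies in a single $S_j$, and that $S_j$ contains $T^{n+i}\alpha$ itself. So the factor equals the exact coding of $T^n\alpha,\dots,T^{n+m-1}\alpha$, and $U_u \ni T^n\alpha$ is nonempty.
\item \emph{Exactly one index $i_0$ of the window is near $Z$.}
\begin{enumerate}
\item Let $\zeta \in Z$ be close to $T^{n+i_0}\alpha$. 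By the separation lemma, the points $T^{i-i_0}\zeta$ with $i\neq i_0$ lie outside $Z$, and hence in open sets $S_{j_i}$. For small noise, the noise balls around $T^{n+i}\alpha$ with $i \ne i_0$ lie in those same $S_{j_i}$.
\item The letter $x(n+i_0)$ is $\sigma(\beta)$ for some $\beta \notin Z$ near $\zeta$. The almost-continuity of $\sigma$, together with the fact that $\beta$ lies in some $S_j$, gives a point $\alpha'$ arbitrarily close to $\zeta$ with $\alpha' \in S_j$ and $\sigma(S_j) = x(n+i_0)$.
\item By continuity, $T^{i-i_0}\alpha' \in S_{j_i}$ for all $i \neq i_0$.
\item So $T^{-i_0}\alpha'$ witnesses $U_u \neq \emptyset$, and the factor is in $L_\star$.
\end{enumerate}
\end{enumerate}
Making "close" precise requires choosing the closeness thresholds uniformly in $\zeta$. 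I would do this by a compactness argument over $Z$, and I expect this uniformity to be the delicate part.

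Combining the two inclusions, factors outside $L_\star$ occur only finitely often, while factors in $L_\star$ recur with bounded gaps. Hence the recurrent factors of every $x\in\mathsf{Traces}_\sigma$ form exactly $L_\star$, and $x$ is almost-periodic.
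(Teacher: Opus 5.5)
Your plan is the paper's proof, written directly rather than by contradiction. Your $U_u$ is exactly the paper's $X_u$, and the bounded-gap part is the same minimality-plus-compactness cover. Your step 1, which controls the noise at every position $T^{n+i}\alpha$, is in fact more careful than the paper's $Y_u$. Your separation lemma is the uniform version of the paper's limit argument producing $\zeta_\star,\zeta'_\star\in Z$ with $T^m\zeta_\star=\zeta'_\star$.

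\textbf{Case 2 does not go through as written.}

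\emph{Step (a).} If ``near $Z$'' means within $\varepsilon_m$, then $\zeta$ may be almost $\varepsilon_m$ away from $T^{n+i_0}\alpha$. You then only know $d(T^{i-i_0}\zeta,T^{n+i}\alpha)<\Omega^{|i-i_0|}(\varepsilon_m)$, and small noise does not improve this. So $T^{i-i_0}\zeta$ and the noise ball at $n+i$ need not lie in the same region.

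\emph{Step (b).} This step is unjustified. Almost-continuity of $\sigma$ concerns the letter $\sigma(\zeta)$, not the region $S_j\ni\beta$, and $\zeta$ need not lie in $\mathsf{cl}(S_j)$. This already happens on the circle if $Z$ is an infinite closed set such as $\{0\}\cup\{2^{-n}\}$ whose complementary arcs are distributed cyclically among three regions. Points of an arc two steps away are within $2^{-n}$ of $\zeta=2^{-n}$, yet their region does not accumulate at $\zeta$.

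\textbf{The fix: use $\beta$ itself as the witness and drop $\zeta$.}
\begin{enumerate}
\item Split according to whether the noise ball at a window position meets $Z$. Case 1 covers the situation where none does.
\item Take the noise below some $\eta$ with $\eta\le\varepsilon_m$ and $\Omega^k(\eta)<\varepsilon_m$ for $1\le k<m$.
\item By your separation lemma, at most one position $i_0$ has a noise ball meeting $Z$. Every other centre has $d(T^{n+i}\alpha,Z)\ge\varepsilon_m$.
\item By connectedness, $B(T^{n+i}\alpha,\varepsilon_m)$ then lies in a single $S_{j_i}$, together with the noise ball at $n+i$.
\item Let $\beta\in S_j$ be the perturbed point at $i_0$ (w.l.o.g.\ $\beta\notin Z$). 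Then $d(T^{i-i_0}\beta,T^{n+i}\alpha)<\Omega^{|i-i_0|}(\eta)<\varepsilon_m$, hence $T^{-i_0}\beta\in U_u$.
\end{enumerate}
This is exactly the paper's device of following the orbit of the perturbed point $T^{-i}\gamma_i$. No compactness argument over $Z$ is needed beyond the separation lemma itself.
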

\begin{proof}
    We shall associate each word $u \in \Sigma^*$ with an open set $X_u \subseteq X$. This assignment will have the property that if the orbit falls in $X_u$, the next $|u|$ letters of the coding will be an occurrence of $u$ provided there is no noise.
    
    We assign $X_\varepsilon = X$ for the empty word, and $X_{bu} = \left(\bigcup_{i: \sigma(S_i) = b} S_i\right) \cap T^{-1}X_u$. Note that here we use the fact that $T^{-1}$ is a homeomorphism that maps open sets to open sets. We claim that $L_\star = \{u \in \Sigma^+ \mid X_u \ne \{\}\}$.

    We shall first prove that if $X_u$ is non-empty, then $u$ is a recurrent factor of all $x \in \mathsf{Traces}_\sigma$, and furthermore, there is a bound $R_u$ such that for all $x \in \mathsf{Traces}_\sigma$, the gaps between consecutive occurrences are eventually bounded by $R_u$. Since $X_u$ is an open set, we can choose a small enough $\delta$ and construct a non-empty open set $Y_u \subseteq X_u$ such that for all $\alpha \in Y_u$, the $\delta$-ball around $\alpha$ is contained in $X_u$.\footnote{Choose any $\alpha \in X_u$, use that $X_u$ is open to get a $\delta'$-ball around $\alpha$ contained in $X_u$, let $Y_u$ be the $(\delta'/3)$-ball around $\alpha$, choose $\delta = \delta'/3$, and apply the triangle inequality.} We see that if $Y_u$ is visited when the noise is less than $\delta$, then it will mark an occurrence of $u$. Since the noise converges to $0$, it would suffice to prove that the gaps between visits to $Y_u$ are bounded. 

    Since the dynamical system is minimal, we have that $\bigcup_{n=0}^\infty T^{-n}Y_u = X$, i.e., the orbit of every point in $X$ eventually visits $Y_u$. Since $Y_u$ is open and $T^{-1}$ is a homeomorphism, we get that $(T^{-n}Y_u)_{n=0}^\infty$ is an open cover of $X$, which, due to compactness, admits a finite subcover $(T^{-n}Y_u)_{n=0}^R$. Thus, every point in $X$ is either in $Y_u$ or will visit $Y_u$ in at most $R$ steps under $T$. This proves that $u$ occurs infinitely often in any noisy trace $x$, and once the noise is bounded by $\delta$, the gaps between occurrences of $u$ in the trace are bounded by $R_u = R$. 

    We now prove the converse, i.e., if $X_u$ is empty (which can happen only if the length $|u| > 1$), then $u$ can occur only finitely often in any noisy trace. We will do so by contradiction: we will show that if a noisy trace contains infinitely many occurrences of $u$, then $d(T^m Z, Z) = 0$ for some $m$ with $|m| \le |u|$, which would contradict the independent discrete orbit condition.
    
    Given $\delta$, consider a noisy trace $x \in \mathsf{traces}_\sigma(\alpha, \nu)$ such that $x(n)\cdots x(n+l-1)$ is an occurrence of $u$ at an index where the noise is guaranteed to be less than $\delta$, i.e., $\nu(n+j) < \delta$ for all $j \ge 0$. Let $\beta_0, \ldots, \beta_{l-1}$ denote $T^n \alpha, \ldots, T^{n+l-1}\alpha$, and let $\gamma_0, \ldots, \gamma_{l-1}$ denote the corresponding perturbed versions. Recall we may assume without loss of generality that $\gamma_0, \ldots, \gamma_{l-1} \notin Z$. We make the following observations.
    \begin{itemize}
        \item The points $\gamma_0, \ldots, \gamma_{l-1}$ are all correctly placed, i.e., each $\gamma_i$ falls in an open set $S_i$ that is coded with the letter $u(i)$.
        \item There does not exist any point $\beta$ such that $\beta, \ldots T^{l-1}\beta$ are all correctly placed, because the set $X_u$ is empty.
    \end{itemize}
    In particular, there exists an $i$ such that $\beta_i$ is incorrectly placed in some $S'$, while $\gamma_i$, which is in the $\delta$-ball around it, is correctly placed in some $S \ne S'$. Since the $\delta$-ball is connected (by property (1) of a robust system), it cannot be the union of disjoint open sets\footnote{The open sets in this case are the intersections of the ball with each of $S_1, \ldots, S_k$.}, and must contain some $\zeta_i \in Z$, i.e., $d(\beta_i, \zeta_i) < \delta$.

    Now, consider the points $T^{-i}\gamma_i, \ldots, T^{l-1-i}\gamma_i$. There necessarily exists some $j \ne i$ such that $T^{j-i}\gamma_i$ is incorrectly placed. By uniform continuity, we have that $d(T^{j-i}\beta_i, T^{j-i}\gamma_i) < \Omega^{|j-i|}(\delta)$, i.e., the incorrectly placed $T^{j-i}\gamma_i$ is in the $\Omega^{|j-i|}(\delta)$-ball around $\beta_j$, whereas the correctly placed $\gamma_j$ is in the $\delta$-ball around $\beta_j$. By similar connectedness arguments above, we get that the $(\max(\delta, \Omega^{|j-i|}(\delta)))$-ball around $\beta_j$, by virtue of containing both $\gamma_j$ and $T^{j-i}\gamma_i$, contains a point $\zeta_j \in Z$, or in other words, $d(\beta_j, \zeta_j) < \max(\delta, \Omega^{|j-i|}(\delta))$.

    We now use uniform continuity on $\beta_i, \zeta_i$ to argue that $d(\beta_j, T^{j-i}\zeta_i) < \Omega^{|j-i|}(\delta)$. Applying the triangle inequality to this and the result of the previous paragraph, we get that 
    $$
    d(T^{j-i}\zeta_i, \zeta_j) < \max(\delta, \Omega^{|j-i|}(\delta)) + \Omega^{|j-i|}(\delta).
    $$

    We can now supply $\delta = 1/2, 1/4, \ldots$, and hence get a sequence of $(\zeta, \zeta')$ such that $\min_{|m|\le |u|} d(T^m\zeta, \zeta')$ converges to $0$. By compactness, any limit point $(\zeta_\star, \zeta'_\star)$ of this sequence of $(\zeta, \zeta')$ lies in $Z \times Z$, is guaranteed to exist, and by continuity (of the metric $d$, homeomorphisms $T, T^{-1}$, and of the modulus of continuity $\Omega$), $$\min_{|m|\le |u|}d(T^m \zeta_\star, \zeta'_\star) = 0,$$ which contradicts the independent discrete orbit condition.

    We have thus proven that $u$ is a recurrent factor if and only if $X_u$ is non-empty, and that if $X_u$ is non-empty, the factor $u$ occurs with bounded gaps once the noise is bounded by some $\delta_u$. This establishes that for all $x \in \mathsf{Traces}_\sigma$, $x$ is almost-periodic, and the set of recurrent factors is $L_\star = \{u \mid X_u \ne \{\}\}$.
\end{proof}

We make some observations concerning effectiveness. To that end, define $\Delta: \mathbb{N} \rightarrow \mathbb{R}$ as $$\Delta(n) = \min_{|m| \le n} d(T^mZ, Z),$$ and note that $\Delta(n) > 0$ for $n > 0$. 
\begin{lemma}
    \label{lem:recurrent-factors}
    Let $L_\star \subseteq \Sigma^+$, and let $W \subseteq \Sigma^\omega$ be a set of almost-periodic words such that for every $x \in W$, the set of recurrent factors of $x$ is $L_\star$. For every $u \in L_\star$, there exists $R_u$ such that for every $x \in W$, eventually every factor of length $R_u$ contains an occurrence of $u$. Computing this $R_u$ Turing-reduces to deciding membership in $L_\star$. 
\end{lemma}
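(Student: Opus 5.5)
The plan is to define $R_u$ purely in terms of $L_\star$, so that it is automatically uniform over $W$ and its computation only needs membership queries to $L_\star$. Concretely, set
$$R_u = \min\{R \ge |u| \mid \text{every } w \in L_\star \cap \Sigma^R \text{ contains an occurrence of } u\}.$$
There are three things to show: this minimum exists, it has the required property for every $x \in W$, and it can be computed from an oracle for $L_\star$.

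\emph{Existence.} If $W$ is empty, any value works. Otherwise fix some $x \in W$. Since $u \in L_\star$ is a recurrent factor of the almost-periodic word $x$, there are $N$ and $G$ such that after position $N$ consecutive occurrences of $u$ in $x$ are at most $G$ apart. Now take any $w \in L_\star$ with $|w| \ge G + |u|$. Then $w$ is a recurrent factor of $x$, so it has an occurrence starting beyond $N$. That occurrence is long enough to contain a full occurrence of $u$, so $w$ contains $u$. Hence $R = G + |u|$ belongs to the set, and the minimum exists. Note that the value of $R_u$ does not depend on which $x$ we fixed, since its definition mentions only $L_\star$. This independence is where the hypothesis that all words of $W$ share $L_\star$ is used.

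\emph{Property.} Let $x \in W$. There are only finitely many words of length $R_u$. Each of them that is not a recurrent factor of $x$ occurs only finitely often, so there is a position $M$ beyond which it no longer occurs. Beyond the maximum of these finitely many $M$'s, every factor of $x$ of length $R_u$ is recurrent, hence lies in $L_\star$. By the definition of $R_u$ it then contains $u$.

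\emph{Computation.} Enumerate $R = |u|, |u|+1, \ldots$. For each $R$, list the finitely many words in $\Sigma^R$, use the oracle to keep those in $L_\star$, and check syntactically whether each kept word contains $u$. Output the first $R$ for which all checks succeed. The existence argument guarantees that this search terminates, which gives the claimed Turing reduction.

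The main subtle step is the existence argument. Termination of the search is not obvious from $L_\star$ alone: it relies on transferring the eventual gap bound from one concrete almost-periodic witness $x$ to all long words of $L_\star$. This works because every word of $L_\star$ recurs in $x$, and in particular recurs in the tail of $x$ where the gap bound $G$ already holds.
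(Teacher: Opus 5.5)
Your proposal is correct and follows essentially the same route as the paper: $R_u$ is the least length at which every member of $L_\star$ contains $u$, it exists because of the gap bound of one fixed $x \in W$, it is found by enumerating lengths with oracle queries, and the required property holds because every $x \in W$ eventually has only recurrent factors of length $R_u$. One small fix in the existence step: take the occurrence of $w$ to start after the first occurrence of $u$ beyond $N$ (possible since $w$ recurs), so that the gap bound applies on both sides and genuinely forces an occurrence of $u$ inside it.
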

\begin{proof}
    The existence of $R_u$ follows almost immediately by definition. Indeed, consider arbitrary $x \in W$. There exists $R$ such that every length-$R$ factor of $x$ contains $u$. We can assert $R_u \le R$, because in particular every recurrent length-$R$ factor will have an occurrence of $u$. To compute $R_u$, we simply enumerate the elements of $L_\star$ until we find a length $R$ such that all length-$R$ members of $L_\star$ contain an occurrence of $u$. It remains to observe that every $x \in W$ has a suffix for which all length-$R$ factors are recurrent.
\end{proof}

\begin{theorem}
    \label{thm:eap-ingredients}
    Let $x \subseteq \Sigma^\omega$ be a noisy trace of a noise-robust dynamical system such that $L_\star$ is recursive and $\Omega, \Delta$ are computable. Let the noise function $\nu$ of $x$ converge effectively. If for each $n$, we can compute $x(n)$, then $x$ is effectively almost-periodic. Thus, given such $x \in \Sigma^\omega$ and an $\omega$-regular language $L$, we can decide whether $x \in L$.
\end{theorem}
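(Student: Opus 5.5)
Condition (i) of effective almost-periodicity is a hypothesis, so the work is condition (ii). Given a finite word $u$, we must compute $R_u$ such that either all occurrences of $u$ lie in the prefix of length $R_u$, or every length-$R_u$ factor of $x$ contains $u$. The plan splits on whether $u \in L_\star$, which we can decide because $L_\star$ is recursive. Once (ii) is established, the final claim is a direct application of Sem\"enov's result on deterministic parity automata run on effectively almost-periodic words.

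\emph{Case $u \in L_\star$.} By Lemma~\ref{lem:recurrent-factors}, applied with $W = \mathsf{Traces}_\sigma$ (legitimate by Thm.~\ref{thm:noise-robust}), we can compute $R'$ such that every length-$R'$ member of $L_\star$ contains $u$. The lemma only says this holds ``eventually'' in $x$, so we still need an explicit point $N$ after which every length-$R'$ factor of $x$ is recurrent, i.e.\ lies in $L_\star$. For that we apply the next case to each of the finitely many words $v \in \Sigma^{R'} \setminus L_\star$, obtaining bounds $N_v$ after which $v$ does not occur. Then $R_u := \max_v N_v + R'$ works: every factor of length $R_u$ either meets the prefix or is followed by a window of recurrent factors of length $R'$, and a factor of length $R_u$ contains a length-$R'$ factor starting beyond $\max_v N_v$. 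That factor is in $L_\star$ and hence contains $u$.

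\emph{Case $u \notin L_\star$ (the main obstacle).} We must compute an index after which $u$ never occurs. We will make the contrapositive argument in the proof of Thm.~\ref{thm:noise-robust} quantitative. That argument shows that an occurrence of $u$ at position $n$, with noise below $\delta$ on $[n, n+|u|)$, yields $\zeta,\zeta' \in Z$ and some $m$ with $0<|m|\le|u|$ such that
\[
d(T^m\zeta,\zeta') < \max(\delta,\Omega^{|u|}(\delta)) + \Omega^{|u|}(\delta),
\]
with $\Omega$ non-decreasing so that $\Omega^{|j-i|}\le \Omega^{|u|}$ after padding if needed. Since $\Delta(|u|)>0$ is computable and $\Omega$ is computable, continuous at $0$ and satisfies $\Omega(0)=0$, we search $\delta = 2^{-k}$ for the first $k$ for which the right-hand side is at most $\Delta(|u|)$. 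This search terminates by continuity at $0$, although some care is needed to ensure the computed values can be compared; we would use strict inequality with a margin, for instance $\le \Delta(|u|)/2$. For this $\delta$, no occurrence of $u$ can start where the noise is below $\delta$. Effective convergence of $\nu$ then gives $N$ with $\nu(n)<\delta$ for all $n\ge N$, and we set $R_u := N + |u|$.

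A delicate point is that the theorem's argument needs $T^m\zeta \ne \zeta'$ for $m$ of either sign. The definition of $\Delta$ with $|m|\le n$, $m\ne 0$, covers this, and $m \ne 0$ holds because the argument produces $j\ne i$. With both cases handled, $x$ is effectively almost-periodic, and Sem\"enov's result decides $x\in L$.
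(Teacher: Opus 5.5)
Your proposal is correct and follows essentially the same route as the paper: you decide whether $u \in L_\star$; for non-recurrent $u$ you use the computable $\Omega$ and $\Delta$ to pick $\delta$ small enough that the contrapositive argument of Thm.~\ref{thm:noise-robust} rules out occurrences once $\nu<\delta$, then apply effective convergence; for recurrent $u$ you combine the bound $R'$ from Lem.~\ref{lem:recurrent-factors} with the non-recurrent case applied to the length-$R'$ words outside $L_\star$; and you finish with Sem\"enov. Your extra care (taking $R_u = N+|u|$, reading $\Delta$ with $m\neq 0$ and both signs, and comparing reals with a margin) only tightens details the paper leaves implicit.
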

\begin{proof}
    We need to show how, given $u \in \Sigma^+$, we can compute $R_u$ such that either $u$ can only occur in the length-$R_u$ prefix of $x$, or $u$ occurs in every length-$R_u$ factor of $x$. One can determine which of the cases holds by querying whether $u \in L_\star$.

    Suppose $u$ is not a recurrent factor. We compute $\delta$ small enough such that for all $m$ with $|m| \le |u|$, we have $\max(\delta, \Omega^m(\delta)) + \Omega^m(\delta) < \Delta(|u|)$. From the proof of Thm.~\ref{thm:noise-robust}, we deduce that there cannot be any occurrence of $u$ at indices where the noise is less than $\delta$. Since the noise function $\nu$ converges effectively, we can compute $R_u$ such that for all $n \ge R_u$, we have $\nu(n) < \delta$, and hence $u$ cannot occur at index $n$. 

    In the case $u$ is a recurrent factor, we can use the previous lemma to compute $R_u'$ such that $x$ has a suffix where all length-$R_u'$ factors are recurrent, and contain an occurrence of $u$. It remains to compute the starting index $R_u''$ of this suffix, and take $R_u = R_u' + R_u''$. To do so, we appeal to the previous case and compute a bound $R_u''$ on the index of the last occurrence of any length-$R_u'$ non-recurrent factor.

    Finally, the last statement follows from the result of Sem\"enov mentioned in the Introduction, see \cite[Chap.~3]{Karimov2023} for a detailed exposition. 
\end{proof}

We remark that in our applications, the open subsets of the torus will be semi-algebraic. This will allow us to effectively decide (see e.g., \cite[App.~A]{prefixind}), given any $u$, whether the attendant open set $X_u$ is non-empty, or in other words, whether $u \in L_\star$.

\section{Sign Descriptions of LRS}
\label{sec:LRS}
Recall that a linear recurrence sequence (LRS) of \emph{order} $d$ over a field $\mathbb{K}$ (we shall consider LRS over real algebraic numbers) is a sequence $(\mu_n)_{n=0}^\infty$ that satisfies the recurrence relation $\mu_{n+d} = a_{d-1}\mu_{n+d-1} + \cdots + a_0 \mu_n$, where $a_0 \ne 0$. We refer to the polynomial $X^d - a_{d-1}X^{d-1} - \cdots - a_0$ as the \emph{characteristic polynomial} of the LRS, and its roots are called the characteristic roots. A characteristic root that is not repeated is called \emph{simple}, an LRS is called simple if all characteristic roots are simple. An LRS for which there is no pair $\lambda_i, \lambda_j$ of distinct characteristic roots such that $\lambda_i/\lambda_j$ is a root of unity is called \emph{non-degenerate}. It is well known that any given LRS can be algorithmically decomposed as the interleaving of non-degenerate LRS (see e.g., \cite{powerofpos}). We shall assume that the distinct characteristic roots $\lambda_1, \ldots, \lambda_k$ are ordered in descending Euclidean absolute value, and shall refer to the roots with maximal absolute value as \emph{dominant}.

It is well known that LRS admit an exponential-polynomial closed form, i.e.,
$$
\mu_n = \sum_{i=1}^k \sum_{j=0}^{m_i-1} p_{ij}n^j \lambda_i^n = \sum_{i=1}^k f_i(n) \lambda_i^n.
$$
Here, we have that $m_i$ is the multiplicity of characteristic root $\lambda_i$, and can assume that $\sum_{i=1}^k m_i = d$ and  $p_{i(m_i-1)} \ne 0$.

The sign description of an LRS $(\mu_n)_{n=0}^\infty$ is the word $x \in \{1, -1, 0\}^\omega$ such that $x(n) = \text{sign}(\mu_n)$. Before we state the main result of this section, we record a lemma (see \cite[Cor.~5.5]{berthe2025toric}, \cite[Thms.~8, 9]{powerofpos}) that will be helpful in accounting for ``degeneracies.'' 

\begin{lemma}
    \label{lem:degenerate}
    Let $x_0, \ldots, x_{d-1}$ be effectively almost-periodic toric words defined by open semi-algebraic sets. The word $x$, defined as $x(qd + r) = x_r(q)$ is also toric, defined by open semi-algebraic sets, and effectively almost-periodic.
\end{lemma}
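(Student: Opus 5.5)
Since each $x_r$ is toric, we fix a presentation for it: a torus $X_r$, a starting point $\alpha_r \in X_r$, a rotation $\gamma_r$, open semi-algebraic sets $S^{(r)}_1,\ldots,S^{(r)}_{k_r}$ and a coding $\sigma_r$, with $x_r(q)=\sigma_r(\alpha_r+q\gamma_r)$. The plan is to build the interleaving as one toric word on a larger torus whose first coordinate records the residue $r$ and advances by $1/d$ at every step.

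\textbf{The torus and rotation.} Take the torus $X = \mathbb{T}\times X_0\times\cdots\times X_{d-1}$. We need a rotation that moves the $r$-th component only when the residue is $r$, but a rotation on a torus moves every component at every step. So we instead let each component rotate by $\gamma_r/d$, and set
$$
\gamma=(1/d,\gamma_0/d,\ldots,\gamma_{d-1}/d), \qquad \alpha = (0,\alpha'_0,\ldots,\alpha'_{d-1}).
$$
The starting points are corrected so that $\alpha'_r + (qd+r)\gamma_r/d = \alpha_r+q\gamma_r$, which holds for $\alpha'_r=\alpha_r - r\gamma_r/d$. The division $\gamma_r/d$ is not canonical on the torus, so we fix a lift of $\gamma_r$ and divide that; any fixed choice works.

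\textbf{The coding sets.} The first coordinate $n/d \bmod 1$ is always one of the points $r/d$. Let $I_r$ be a small open arc around $r/d$, chosen disjoint from the other arcs. For each $r$ and $i$ define the open set
$$
I_r\times \cdots\times S^{(r)}_i\times\cdots,
$$
with $S^{(r)}_i$ in the $X_r$ slot and the whole torus in every other slot, and give it the letter $\sigma_r(S^{(r)}_i)$. These sets are semi-algebraic once the torus is identified with a product of unit circles, and they are pairwise disjoint. At step $n=qd+r$ the point $\alpha+n\gamma$ lies in the closure of the set coding the correct letter $x_r(q)$, so the resulting word is $x$ and $x$ is toric.

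\textbf{Effective almost-periodicity.} This can be shown directly from the definition, without the geometric construction. Computing $x(n)$ is immediate from the corresponding $x_r$.

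Now fix a finite word $u$. For each starting residue $s$ and each $r$, let $u_{s,r}$ be the subsequence of $u$ read at the positions congruent to $r$ when $u$ starts at a position congruent to $s$ modulo $d$. Then $u$ occurs in $x$ at position $qd+s$ exactly when, for every $r$, the word $u_{s,r}$ occurs in $x_r$ at an index aligned with $q$ (namely $q$ or $q+1$, depending on $r$). Each $x_r$ is effectively almost-periodic, so we can compute the bounds $R_{u_{s,r}}$.

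The main obstacle is that these separate bounds do not control \emph{simultaneous} occurrences at aligned indices. Almost-periodicity of each $x_r$ alone does not give bounded gaps for joint occurrences. I would therefore rely on the joint toric structure from the first step. The joint word is the coding of a single rotation, and restricting it to the closure of one orbit, which is a subtorus, makes it minimal. Joint occurrences of $u$ then correspond to visits to a semi-algebraic open set $X_u$, as in the proof of Thm.~\ref{thm:noise-robust}. Nonemptiness of $X_u$ on the orbit closure is decidable: the orbit closure is computable via Masser's theorem on multiplicative relations, as in \cite{powerofpos}. The return-time bound then follows by the compactness argument, made effective by enumerating finite covers.
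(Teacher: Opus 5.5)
Your proposal is correct, and it is essentially the standard argument behind the results the paper cites instead of giving a proof (\cite[Cor.~5.5]{berthe2025toric}, \cite[Thms.~8, 9]{powerofpos}): a product torus with a residue coordinate rotating by $1/d$ and $d$-th roots of the individual rotations makes $x$ toric, and effectiveness comes from the Masser-computable orbit closure, Tarski-decidable non-emptiness of $X_u$ on it, and a terminating search for a finite cover. Delete your opening claim that effectiveness follows directly from the definition (you rightly refute it yourself), say that the orbit closure is a finite union of cosets of a subtorus rather than a single subtorus, and state explicitly two implicit assumptions: that the rotations and starting points are algebraic, which Masser's theorem needs, and that the orbit stays inside the open coding sets, which makes occurrences of $u$ coincide with visits to $X_u$.
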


We shall also use the following self-evident observation: if we can compute $N$ such that $x(N, \infty)$ is effectively almost-periodic, then $x$ is effectively almost-periodic.

\begin{theorem}
\label{thm:LRS}
    Let $(\mu_n)_{n=0}^\infty$ be a real algebraic LRS that satisfies one of the following conditons:
    \begin{enumerate}
        \item has a single real dominant root,
        \item has two dominant roots whose ratio is not a root of unity,
        \item has three dominant roots which are all simple, and the ratio of the complex conjugate pair is not a root of unity,
        \item has order at most four.
    \end{enumerate}
    We can construct a Turing machine which witnesses that the sign description of $(\mu_n)_{n=0}^\infty$ is effectively almost-periodic. Thus, given such an LRS and an $\omega$-regular language $L$, we can decide whether the sign description $x$ is in $L$. 
\end{theorem}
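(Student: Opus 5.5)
We first reduce to the non-degenerate case. An LRS can be algorithmically decomposed into finitely many non-degenerate subsequences $\mu_{qd+r}$, and each of these inherits whichever of the conditions (1)--(4) holds: dominant roots of a subsequence are $d$-th powers of the original ones, and order does not increase. We will show that each such sign description coincides, after a computable index, with a word that is effectively almost-periodic; where the relevant object is toric and defined by open semi-algebraic sets, Lemma~\ref{lem:degenerate} recombines the pieces. By the observation preceding the theorem, it then suffices to treat a non-degenerate LRS from some computable index $N$ onward.

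\textbf{Case (1).} With a single real dominant root $\lambda_1$ we have $\mu_n = f_1(n)\lambda_1^n(1+o(1))$, with an effective error bound. Hence from a computable $N$ onward the sign is $\operatorname{sign}(p_{1(m_1-1)})\cdot \operatorname{sign}(\lambda_1)^n$. This word is eventually periodic and so trivially effectively almost-periodic.

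\textbf{Case (2).} The roots are a conjugate pair $\lambda,\bar\lambda$ with $\lambda/\bar\lambda$ not a root of unity. Normalise by $n^{m-1}|\lambda|^n$, where $m$ is the common multiplicity of $\lambda$ and $\bar\lambda$ (real $\mu$ forces conjugate coefficients). This gives
$$\mu_n/(n^{m-1}|\lambda|^n) = c\sin(\varphi + n\theta) + r(n),$$
where $r(n)\to 0$ effectively. Any lower-degree terms $n^j$ with $j<m-1$ of the dominant roots, and all non-dominant roots, decay at a computable rate, so the noise function converges effectively; this is the point where non-simple LRS are allowed. This is exactly the running example: the circle rotation by the irrational $\theta/2\pi$, partitioned into open semicircles and their two endpoints. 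All four conditions of Def.~\ref{def:robust} hold. The independent discrete orbit condition holds because the endpoints differ by $\pi$, and $n\theta\equiv 0$ or $\pi \pmod{2\pi}$ would make $\lambda/\bar\lambda$ a root of unity. For computability, $\Omega(\delta)=\delta$; $\Delta(n)$ is the minimum over $|m|\le n$ of the distances of $m\theta$ and $m\theta+\pi$ to $0$, which is computable from algebraic data; and $L_\star$ is decidable since the sets $X_u$ are semi-algebraic. Each $x(n)$ is computable by exact algebraic arithmetic. Thm.~\ref{thm:eap-ingredients} then applies.

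\textbf{Case (3).} The three simple dominant roots are $\rho,\lambda,\bar\lambda$ with $\rho$ real. Normalising gives $\mu_n/|\rho|^n = a\,\mathrm{sign}(\rho)^n + c\sin(\varphi+n\theta) + r(n)$. Passing to even and odd subsequences removes $\mathrm{sign}(\rho)^n$. The coding is then on the circle, with $Z$ the (at most two) solutions of $a' + c\sin t = 0$. If $|a'|>|c|$ the sign is eventually constant. If $|a'|=|c|$, the zero set is a tangency point, and the noise can flip signs near it; here I would instead appeal to the number-theoretic results of \cite{powerofpos}, which hold for simple LRS and give that the sign pattern eventually equals the toric word (assumed from that work). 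If $|a'|<|c|$, we have two points $t_1,t_2$, and the independent discrete orbit condition requires $t_1-t_2\notin \mathbb{Z}\theta$. Verifying this condition, or handling its failure, is the \textbf{main obstacle}. When it fails, one again invokes the finitely-many-exceptions results of \cite{powerofpos} for simple LRS. When it holds, computability of $\Delta$ needs an effective lower bound on $\|m\theta-(t_1-t_2)\|$ for $|m|\le n$, which follows from algebraicity of all the quantities involved.

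\textbf{Case (4).} Order at most four is reduced to the above by enumerating the dominant-root configurations. One, two, or three dominant roots fall under Cases (1)--(3). The exceptions are configurations with repeated roots among three dominant roots, or four dominant roots; these must be simple when the order is four, and are handled via the results of \cite{powerofpos} (Thms.~8, 9), which exhibit the sign pattern as eventually toric with semi-algebraic partitions. Lemma~\ref{lem:degenerate} then gives effectivity. Finally, Semënov's result yields decidability of $x\in L$.
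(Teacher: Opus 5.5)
\textbf{Main gap: the zeros of $(\mu_n)$.} Your check that the semicircle system satisfies Def.~\ref{def:robust} in Case (2) is fine. The problem is that outside Case (1) the proposal never deals with the zeros of $(\mu_n)$, and this is the crux of effectivity: any proof of effective almost-periodicity here also solves the Skolem problem.

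In Case (2) you feed the sign description straight into Thm.~\ref{thm:eap-ingredients}. But that word is over $\{1,-1,0\}$, while a noisy trace of the semicircle system can only use the letters $\pm 1$. The coding must be almost continuous, so no letter can be carried by $Z$ alone. Independently of the formalism, you need a computable $R_0$ bounding the last $n$ with $\mu_n=0$. You also need the bound $R_u''$ from the proof of Thm.~\ref{thm:eap-ingredients} for non-recurrent factors containing $0$.

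Effective convergence of $r(n)$ cannot give this. A zero means $c\sin(\varphi+n\theta)=-r(n)$. Since $\theta/\pi$ is irrational, $\sin(\varphi+n\theta)$ comes arbitrarily close to $0$ infinitely often, so excluding exact cancellation needs a Baker-type lower bound. The missing ingredient is Lemma~\ref{lem:suffix-LRS}, which computes $N$ beyond which the dominant part strictly exceeds the remainder in absolute value. Past $N$ there are no zeros, and the sign equals that of $\sum_j a_j n^j\cos(n\theta+\varphi_j)$. That word is a genuine noisy trace, and only then does your Case (2) argument go through.

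\textbf{Case (3).} The same lemma removes what you call the main obstacle. Since the three dominant roots are simple, beyond $N$ the sign coincides \emph{exactly} with that of $a(\pm1)^n+b\cos(n\theta+\varphi)$. This is an honest toric word (or an interleaving of two), hence effectively uniformly recurrent. No noise remains, so you never need to examine the independent discrete orbit condition for $t_1-t_2$ or the tangency case. Your fallback there to results for simple LRS also does not cover Case (3), which assumes only the dominant roots to be simple.

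\textbf{Case (4).} The index from which the sign pattern is toric must likewise be computed, via Lemma~\ref{lem:suffix-LRS} or the MSTV Skolem algorithm. Lemma~\ref{lem:degenerate} only handles the interleaving, not this index.

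\textbf{Reduction to non-degenerate subsequences.} Lemma~\ref{lem:degenerate} interleaves toric words, not noisy traces. So your opening reduction is not justified when the pieces are noisy. It is also unnecessary in Cases (1)--(3), where only ratios of dominant roots matter.
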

\begin{proof}
    Since precise computations can be carried out on the field of real algebraic numbers, it is clear that given any $n$, we can always determine the $n$-th letter of the sign description. We shall therefore focus on showing how, given a finite word $u$, we can compute $R_u$ such that $u$ either does not occur at any index beyond $R_u$, or occurs in every length-$R_u$ factor of the sign description. 

    Case (1) is trivial. If there is a single real dominant root, then the sign description is effectively eventually periodic with period at most $2$.

    We note that showing that the sign-description is effectively almost-periodic would entail solving the Skolem problem, i.e., computing the set of indices at which the LRS is $0$. Fortunately, the cases of the statement are amenable to techniques used to solve the Skolem problem for recurrences in the MSTV class, named after Mignotte, Shorey, Tijdeman \cite{MST1984}, and Vereshchagin \cite{Vereshchagin1985}, who independently showed that the Skolem problem is decidable for LRS of order at most four. The key technical lemma uses Baker's theory of linear forms in logarithms. We refer the reader to \cite[Sec.~3.2]{bacik2025survey}, \cite[Sec.~3, 4]{bilu2025survey} for expository proofs of the following result, which makes the techniques of \cite{powerofpos} effective at low orders. 

    \begin{lemma}
    \label{lem:suffix-LRS}
        Let $(\mu_n)_{n=0}^\infty$ be a real algebraic LRS which has at most three dominant roots $\lambda_1, \ldots, \lambda_r$, and satisfies the property that the ratio of any pair of distinct dominant roots is not a root of unity. We can compute an index $N$ such that for all $n \ge N$, 
        $$
        \left|\sum_{i=1}^r f_i(n) \lambda_i^n \right| > \left|\sum_{j=r+1}^k f_j(n) \lambda_j^n \right|.
        $$
    \end{lemma}

    The lemma has the following consequences for Cases (2) and (3).
    \begin{itemize}
        \item Any word $u$ that contains the letter $0$ cannot be a factor of the sign description beyond the index $N$.
        \item In Case (3), beyond index $N$, the sign description of the given LRS matches that of $\sum_{i=1}^3 p_{i0} \lambda_i^n$, which is in turn the same as the sign pattern of a sequence of the form $a + b\cos(n\theta+\varphi)$ or $a(-1)^n + b\cos(n\theta + \varphi)$. This computable suffix of the sign description is effectively uniformly recurrent by virtue of being a toric word\footnote{The former is obviously obtained as a coding of a rotation on a $1$-torus, the latter is the interleaving of two such codings, which is still toric by Lem.~\ref{lem:degenerate}. Effectiveness follows because all computations to determine whether an open interval corresponding to a putative factor is non-empty involve algebraic numbers and semialgebraic sets.}, see, e.g., \cite[proof of Thm.~3.1]{prefixind}, \cite[Thm.~8, Thm.~9]{powerofpos}.
    \end{itemize}

The eventual effective uniform recurrence settles Case (3). We now turn to Case (2), which is tackled by generalising the arguments in \cite[Sec.~6.4]{berthe2025toric}. Beyond index $N$, the sign description of the LRS is the same as that of the sequence
$$
\sum_{j=0}^{m_i-1} a_j n^j \cos(n\theta + \varphi_j),
$$
none of whose terms are $0$. Intuitively, the sign of each factor is driven by the term $a_{m_i-1}n^{m_i-1}\cos(n\theta + \varphi_{m_i-1})$, and can be the opposite only if the cosine factor is less than $A/n$ for some effective $A$. More formally, the sign description of the above sequence is a noisy trace of the noise-robust dynamical system defined by an irrational rotation on the $1$-torus partitioned into semicircular arcs. We immediately obtain from Thm.~\ref{thm:noise-robust} that the sign description is almost-periodic. Furthermore, $L_\star$ is recursive by semi-algebraic geometry, the noise function converges effectively, the modulus of continuity $\Omega(\delta)$ is simply $\delta$, and $\Delta(n) = \min_{d, |m|\le n}|m\theta - d\pi|$ can be effectively under-approximated. We obtain from Thm.~\ref{thm:eap-ingredients} that the noisy trace, i.e., the sign description, is effectively almost-periodic.

Only Case (4) remains. Three subcases arise: (i) All four roots are distinct and dominant; (ii) There are fewer than four dominant roots, and there is no pair of distinct dominant roots whose ratio is a root of unity; (iii) There are two or three dominant roots, and the ratio between a distinct pair is a root of unity. Of these, subcase (ii) has been subsumed by prior discussions. 

In subcase (i), we have that the LRS is simple, has roots $\{\pm \rho, \rho e^{\pm i\theta}\}$ or $\{\rho e^{\pm i\theta_1}, \rho e^{\pm i\theta_2}\}$, and hence its sign pattern has a suffix that is effectively uniformly recurrent by virtue of being toric \cite[Thm.~11]{powerofpos} (more precisely, the suffix of the sign pattern is the interleaving of effectively almost-periodic toric words, and will inherit the property by Lem.~\ref{lem:degenerate}). The starting index of this suffix can be computed using Lem.~\ref{lem:suffix-LRS} or by directly solving the Skolem problem for the ensuing low-order non-degenerate LRS in the MSTV class. This identifies when all of the interleaved non-degenerate LRS are guaranteed to be non-zero.  

In subcase (iii), if there are three dominant roots, the LRS is of the form $(\pm \rho)^n + \rho^n \cos (n\pi/d + \varphi) + (\pm \gamma)^n$, where $0 < \gamma < \rho$. It is easy to see that the sign description will be effectively ultimately periodic. If there are two dominant roots, it is easy to see that the sign pattern is effectively ultimately periodic, except if the remaining non-dominant roots are complex conjugates and not roots of unity, and the contribution from dominant terms is periodically $0$. We can still apply Lem.~\ref{lem:suffix-LRS} to identify when the non-dominant contribution will persistently be non-zero, and then use Lem.~\ref{lem:degenerate} to deduce that the suffix of overall sign description, which is the interleaving of effectively almost-periodic, toric sign descriptions of low-order non-degenerate LRS, is itself toric and effectively almost-periodic.

To end the proof, the last statement follows from Sem\"enov's theorem, as in Thm.~\ref{thm:eap-ingredients}.
\end{proof}

There are known instances of order-five LRS where is Skolem problem is unresolved, and instances of LRS with three dominant roots whose sign description is provably not almost-periodic \cite[Sec.~4]{prefixind}, and instances of LRS with three dominant roots for which the positivity and ultimate positivity problems (respectively, does the sign description contain $-1$, does the sign description contain $-1$ only finitely many times) are unresolved \cite[Sec.~5]{pos-6}.

\section{Prefix-Independent Verification}
\begin{definition}
    An $\omega$-regular language $L \subseteq \Sigma^\omega$ is said to be prefix-independent if for all $u, v \in \Sigma^*$ and $x \in \Sigma^\omega$, $ux \in L$ if and only if $vx \in L$.
\end{definition}

In other words, if $x'$ is obtained from $x$ by making finitely many edits, then either $x, x' \in L$ or $x, x' \notin L$. We remark that any non-empty prefix-independent language is a liveness property, but the liveness property $(\Sigma\Sigma)^*(a\Sigma)^\omega$ is \emph{not} a prefix independent language. 

\begin{proposition}
\label{prop:prefix-ind-buchi}
    An $\omega$-regular language $L \subseteq \Sigma^\omega$ is prefix-independent if and only if $L = \bigcup_{i=1}^d \Sigma^* V_i^\omega$, where $V_1, \ldots, V_d$ are regular languages not containing the empty word.
\end{proposition}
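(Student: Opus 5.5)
We prove both directions, the reverse being routine. \emph{If $L = \bigcup_{i=1}^d \Sigma^* V_i^\omega$}: membership of a word in $\Sigma^* V_i^\omega$ depends only on whether some suffix lies in $V_i^\omega$. The words $ux$ and $vx$ share all suffixes of $x$. Hence $ux \in \Sigma^*V_i^\omega$ iff some suffix of $x$ lies in $V_i^\omega$ iff $vx \in \Sigma^*V_i^\omega$. Taking the union over $i$ gives prefix-independence, and the language is $\omega$-regular as a finite union of $\omega$-regular languages.

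\emph{Conversely, let $L$ be prefix-independent.} Fix a deterministic parity automaton $\mathcal{A} = (Q,\Sigma,q_\mathsf{init},\tau,\mathsf{col})$ recognising $L$. The key observation is that prefix-independence makes acceptance independent of the starting state, in the following sense. For any state $q$ reachable from $q_\mathsf{init}$ by a word $w$, and any $x$, the run from $q$ on $x$ is the tail of the run from $q_\mathsf{init}$ on $wx$. Since $wx \in L \iff x \in L$, the automaton started at any reachable state $q$ recognises exactly $L$. We may assume all states are reachable.

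\emph{Candidate languages.} For each state $q$ and even colour $c$, let $V_{q,c}$ be the regular language of nonempty words $w$ with three properties: $\tau(q,w) = q$; the run from $q$ on $w$ visits colour $c$; and no state on that run has colour greater than $c$. We set $L' = \bigcup_{q,c} \Sigma^* V_{q,c}^\omega$ and aim to show $L = L'$.

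\emph{Inclusion $L \subseteq L'$.} Take $x \in L$ and its accepting run, with maximal infinitely recurring colour $c$ even. Choose a state $q$ of colour $c$ visited infinitely often. Choose also an index after which no colour above $c$ is seen. Cutting the run at the successive visits to $q$ after that index factors a suffix of $x$ into words of $V_{q,c}$. So $x \in \Sigma^* V_{q,c}^\omega$.

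\emph{Inclusion $L' \subseteq L$.} Take $x = w y$ with $y = v_1v_2\cdots$ and each $v_j \in V_{q,c}$. Start $\mathcal{A}$ at $q$ on $y$. The run returns to $q$ after each $v_j$, sees colour $c$ infinitely often, and never sees any colour above $c$. So the maximal recurring colour is $c$, which is even, and the run from $q$ accepts $y$. By the key observation, $y \in L$. By prefix-independence, $x = wy \in L$.

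The main obstacle is the inclusion $L' \subseteq L$, because a run over $V_{q,c}^\omega$ starting from $q_\mathsf{init}$ need not pass through $q$ at the right moments. The key observation resolves this: acceptance from any reachable state coincides with $L$. The remaining care is to require that $q$ be reachable, which we ensured by trimming unreachable states.
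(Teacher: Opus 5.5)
Your proof is correct. The forward direction is the paper's argument. One step is compressed: passing from ``some suffix of $ux$ lies in $V_i^\omega$'' to ``some suffix of $x$ does'' requires discarding finitely many nonempty blocks. That is the paper's $u'v_mv_{m+1}\cdots$ step, and it is where $\varepsilon\notin V_i$ is used.

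The converse takes a genuinely different route from the paper's.

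\begin{itemize}
\item The paper invokes B\"uchi's characterisation $L=\bigcup_i U_iV_i^\omega$ and replaces each nonempty $U_i$ by $\Sigma^*$. One inclusion is then trivial. For the other, any $wy$ with $y\in V_i^\omega$ differs only in a finite prefix from some $uy\in U_iV_i^\omega\subseteq L$.
\item You stay inside the deterministic parity automaton. You prove that in a trimmed automaton for a prefix-independent language, every state is an equivalent initial state. You then build the $V$'s explicitly as colour-bounded cycle languages $V_{q,c}$.
\end{itemize}

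In effect, your inclusion $L\subseteq L'$ reproves, for this automaton, a decomposition $L=\bigcup_{q,c}U_qV_{q,c}^\omega$, where $U_q$ is the set of words leading to $q$. Your key observation then does the work of the paper's replacement of $U_i$ by $\Sigma^*$.

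What each buys:
\begin{itemize}
\item The paper's version is shorter. It also shows that the $V_i$ of \emph{any} given decomposition already suffice.
\item Yours is self-contained within the parity-automaton framework the paper adopts throughout, and it yields the $V_i$ effectively from $\mathcal{A}$.
\item Yours also makes explicit that all reachable states recognise $L$. The paper relies on the same mechanism later, when it steers a run into a chosen bottom strongly connected component in Prop.~\ref{prop:disjunctive}, and when it picks $w'$ reaching $q_\star$ in Thm.~\ref{thm:prefix-ind}.
\end{itemize}
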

\begin{proof}
    We assume without losing generality that $L$ is non-empty.
    
    Suppose $L = \bigcup_{i=1}^d \Sigma^* V_i^\omega$. Suppose, for the sake of contradiction, there exists a pair $x, x'$ of words a finite edit-distance apart such that $x \in \Sigma^*V_i^\omega \subseteq L$ but $x' \notin L$. We have that $x = uv_1v_2\ldots$, where for each $n \ge 1$, $v_n \in V_i$. Since $x'$ is only a finite edit-distance away from $x$, it can be factorised as $u'v_m v_{m+1}\ldots$, which witnesses that $x' \in \Sigma^*V_i^\omega$, a contradiction. Thus $L = \bigcup_{i=1}^d \Sigma^* V_i^\omega$ implies that $L$ is prefix-independent.

    Conversely, suppose $L = \bigcup_{i=1}^d U_i V_i^\omega$ is prefix-independent. Clearly, $L \subseteq \bigcup_{i=1}^d \Sigma^* V_i^\omega$. We shall prove that the reverse inclusion also holds. To that end, consider $x \in \Sigma^*V_i^\omega \subseteq \bigcup_{i=1}^d \Sigma^* V_i^\omega$. This word is clearly a finite edit-distance away from a word $x' \in U_iV_i^\omega \subseteq L$. Since $L$ is prefix-independent, this implies $x \in L$, and hence $L = \bigcup_{i=1}^d \Sigma^* V_i^\omega$.
\end{proof}

The rest of this section is organised as follows. We first formalise the intuition that prefix-independent $\omega$-regular languages comprise either ``very dense'' or ``very sparse'' subsets of $\Sigma^\omega$. This intuition is further reflected in the main result of the section: almost-periodic words with the same set of recurrent factors are indistinguishable to a prefix-independent $\omega$-regular language.

We say a word $x \in \Sigma^\omega$ is \emph{disjunctive} if the set of its recurrent factors is $\Sigma^+$. Given an alphabet $\Sigma$, we use $D_\Sigma$ to denote the set of disjunctive words over it.

\begin{proposition}
    \label{prop:disjunctive}
    Let $L \subseteq \Sigma^\omega$ be a prefix-independent $\omega$-regular language. We have that either $D_\Sigma \subseteq L$ or $D_\Sigma \cap L = \{\}$. Furthermore, we can determine which is the case.
\end{proposition}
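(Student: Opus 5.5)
Fix a deterministic parity automaton $\mathcal{A} = (Q, \Sigma, q_\mathsf{init}, \tau, \mathsf{col})$ recognising $L$. The plan is to show that every disjunctive word produces the same set of states visited infinitely often. Prefix-independence will then give the dichotomy, and the automaton structure will give decidability.

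First, I would consider the graph on $Q$ whose edges are given by $\tau$, and take its bottom strongly connected components (BSCCs): SCCs from which no edge leaves. From every state some BSCC is reachable, and from any state of a BSCC $C$, every state of $C$ is reachable. Since $Q$ is finite, I would build one finite word $w$ that drives every state into a BSCC. Concretely, concatenate words $w_1 w_2 \cdots w_{|Q|}$, where $w_j$ sends the current image of the $j$-th state into a BSCC. Once a state is inside a BSCC it stays there, so $\tau(q, w)$ lies in some BSCC for every $q \in Q$.

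Next, I would strengthen this so that each BSCC is fully covered. For each BSCC $C$ and each pair $p, p' \in C$ there is a word leading from $p$ to $p'$. By concatenating suitable such words I can obtain a single word $w_C$ such that, from every starting state in $C$, the run on $w_C$ visits every state of $C$. This uses the same iterative trick: fix the target states in some order and append a path from the current image of each start state. Let $W = w\, w_{C_1} \cdots w_{C_m}$, listing all BSCCs.

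Now take a disjunctive word $x$. Since $W$ recurs in $x$, its first occurrence forces the run into some BSCC $C$. The run then never leaves $C$, and every later occurrence of $W$ makes it visit all states of $C$. Hence the set of states visited infinitely often is exactly $C$. Acceptance of $x$ is therefore determined by whether $\max \mathsf{col}(C)$ is even.

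The key point is that $C$ may depend on the prefix. This is where prefix-independence enters. Let $x, y \in D_\Sigma$ with $x \in L$, and let $C_x$ be the BSCC reached by $x$. For any BSCC $C'$, pick a state $q' \in C'$ and a word $v$ reaching $q'$ from $q_\mathsf{init}$. Then $v\,x$ is still disjunctive, and $L$ contains it by prefix-independence. Applying the same argument from the state $q'$, the run is trapped in $C'$, so $C'$ has even maximal colour. Thus all BSCCs reachable from $q_\mathsf{init}$ are accepting, and hence $y \in L$. Symmetrically, if some disjunctive word is rejected, then all of them are rejected.

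For decidability, I would simply compute the BSCCs reachable from $q_\mathsf{init}$ and check whether their maximal colours are even. The discussion above shows this answer is uniform across all reachable BSCCs. The main obstacle is the careful construction of $W$, so that every run is both trapped in a BSCC and covers it. That step is routine finite-state synchronisation, not a deep argument.
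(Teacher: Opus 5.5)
Your proof is correct and follows the paper's route. Like the paper, you show that the states visited infinitely often on a disjunctive word form exactly one bottom strongly connected component; the paper cites prior work for this, while you prove it directly with the covering word $W$. You then prepend a finite word steering the run into any reachable BSCC, so all reachable BSCCs must share one parity, and you decide by checking it, being slightly more careful than the paper in restricting to BSCCs reachable from $q_\mathsf{init}$.
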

\begin{proof}
    Consider a deterministic parity automaton $\mathcal{A}$ recognising $L$, and in particular, the graph induced by it. We can prove \cite[Thm.~4.16]{berthe2024mso} (detailed proof in \cite[App.~A.4]{berthe2024msoarxiv}) that in the run on any disjunctive word $x$, the set of states visited infinitely often is precisely a bottom strongly connected component. To determine whether the set of disjunctive words is included in, or excluded from $L$, we check whether the highest colour in a bottom strongly component is even or odd. Note that the parity must be consistent across different bottom strongly components, otherwise the prefix-independence of $L$ would be contradicted. (One can make finitely many edits to a disjunctive word so that the run settles in any chosen bottom strongly connected component; these edits, however, alter neither disjunctivity nor membership in the language.) 
\end{proof}

In other words, a prefix-independent $\omega$-regular language either contains almost all words, or excludes almost all words. Here, the quantifier ``almost all'' is with respect to the uniform Borel probability measure on $\Sigma^\omega$, which assigns the cylinder $\{ux \mid x \in \Sigma^\omega\}$ measure $1/|\Sigma|^{|u|}$. 

The main result of this section is a slightly generalised reformulation of \cite[Sec.~5]{prefixind}, with simplified techniques. (We leave gauging the extent of simplification to the discernment of the reader.) We introduce an auxiliary technical notion before we prove the theorem. 

In order to aggregate information about automaton runs, we define a \emph{journey} to be a tuple in $Q \times Q \times \mathbb{N}$, which describes a finite run $q_0q_1\ldots q_l$ on a word $u_0\ldots u_{l-1}$ by recording the starting state $q_0$, the ending state $q_l$, and the maximum colour among those of $q_1, \ldots, q_l$. Clearly, if a word $u$ can make the journey $(q_1, q_2, c_1)$ and a word $v$ can make the journey $(q_2, q_3, c_2)$, then the word $uv$ can make the journey $(q_1, q_3, \max(c_1, c_2))$. For a factorisation into finite words $u_0u_1\cdots$ of an infinite word $x$, if each $u_i$ makes can make the journey $(q_i, q_{i+1}, c_i)$, then the automaton accepts $x$ if and only if $\limsup_{i \in \mathbb{N}}c_i$ is even.

\begin{theorem}
\label{thm:prefix-ind}
    Let $W \subseteq \Sigma^\omega$ be a set of almost-periodic words, and let $L_\star \subseteq \Sigma^*$ be such that for all $x \in W$, the set of recurrent factors of $x$ is $L_\star$. Let $L \subseteq \Sigma^\omega$ be a prefix-independent $\omega$-regular language. We have that $W \subseteq L$ or $W \cap L = \{\}$. Furthermore, given an oracle for membership in $L_\star$, we can decide which of the above two cases holds.
\end{theorem}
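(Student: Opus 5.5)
Fix a deterministic parity automaton $\mathcal{A} = (Q, \Sigma, q_\mathsf{init}, \tau, \mathsf{col})$ recognising $L$. The aim is to show that the acceptance verdict on any $x \in W$ depends only on $L_\star$, and that it can be computed from $L_\star$. Since $L$ is prefix-independent, the verdict on $x$ can be read from the run of $\mathcal{A}$ started in \emph{any} state on \emph{any} suffix of $x$. This is because $vx' \in L \iff x' \in L$, so any state $q$ reached by reading some word $v$ yields the same verdict on $x'$.

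First, pass to a suffix $x'$ of $x$ all of whose factors lie in $L_\star$. Such a suffix exists: for each length $\ell$ only finitely many length-$\ell$ factors are non-recurrent, and by almost-periodicity one can argue as in Lem.~\ref{lem:recurrent-factors}. In fact we only need the factors up to some length, so we will handle this with a compactness choice.

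Next, describe the run in terms of journeys. For $u \in L_\star$ let $J(u) = \{(q, \tau(q,u), c)\}$ be the set of journeys of $u$, ranging over starting states $q$. Choose, by a greedy/Ramsey-type argument on the finite set $2^{Q\times Q\times \mathbb{N}}$, a word $w \in L_\star$ that is ``maximal'': for every $v$ with $wvw \in L_\star$ (such $v$ exist, by recurrence, for each occurrence gap), the set of states reachable from $Q$ via $w$ is stable. Concretely, I would define the verdict as follows. Let $\mathcal{G}$ be the graph on $Q$ whose edges are the journeys of words in $L_\star$. Consider the strongly connected components $C$ of $\mathcal{G}$ that are \emph{bottom} with respect to $L_\star$-words. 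For such a $C$, the maximal colour $c_C$ of journeys within $C$ is attained by some word $u_C \in L_\star$, and the verdict is the parity of $c_C$.

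The argument then proceeds in two steps. (a) The run on $x'$ eventually stays inside one such bottom component $C$. It reaches one because $x'$ contains every word of $L_\star$ infinitely often, and once inside it never leaves, since every factor is in $L_\star$. (b) Within $C$, the run realises the maximal colour $c_C$ infinitely often. To see this, cut $x'$ into blocks. By almost-periodicity each block of length $R$ contains a fixed word $u^\star \in L_\star$, where $u^\star$ is chosen as a concatenation-type witness in $L_\star$ that drives every state of $C$ through a journey of colour $c_C$ (synchronising within $C$). I need $u^\star \in L_\star$; since $L_\star$ is factor-closed and recurrence of $x'$ gives extensions, I build $u^\star$ by successively extending within factors of $x'$, using that between any two states of $C$ there is a journey by an $L_\star$-word occurring in $x'$.

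The main obstacle is constructing such a $u^\star$ that works from \emph{every} state of $C$, and showing the verdict is the same for all bottom components. I would try iterated composition: start with $u_0$ achieving colour $c_C$ from some state $q_0$, then append words so that the current set of images of $C$ keeps shrinking in a way that always passes through colour $c_C$. Independence from the choice of bottom component, and hence the dichotomy, follows because $x' \in W$ is independent of the start state by prefix-independence, while every $C$ is reachable. Decidability then follows because, with the oracle for $L_\star$, the components and the colours $c_C$ can be found by enumerating words of increasing length until the journey sets stabilise. A bound on the required length can be obtained from the finite sets of journeys, as in Lem.~\ref{lem:recurrent-factors}.
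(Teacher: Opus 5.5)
Your proposal has a genuine gap at its core. The verdict you propose is the parity of the largest colour $c_C$ of $L_\star$-journeys inside a bottom component, and this is not the right invariant. The state of the run is correlated with the letters that follow it, so most $L_\star$-journeys available from a state are never actually taken.

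Concretely, let $W=\{(ab)^\omega\}$, so $L_\star$ consists of the non-empty alternating words. Let $L$ be the prefix-independent language of words with infinitely many occurrences of $aa$. It is recognised by the DPA with states $s_0,s_a,s_{aa}$ of colours $1,1,2$, where $b$ sends every state to $s_0$ and $a$ sends $s_0\mapsto s_a$, $s_a\mapsto s_{aa}$, $s_{aa}\mapsto s_{aa}$. The journey $(s_a,s_{aa},2)$ of the $L_\star$-word $a$ puts all three states in a single bottom component with $c_C=2$. So your rule accepts, yet $(ab)^\omega\notin L$.

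Correspondingly, the word $u^\star$ required in step (b) cannot exist: from $s_0$ no alternating word ever reaches $s_{aa}$. What you flag as the main obstacle is an impossibility, not a technicality. The inference in (a), that the run reaches a bottom component because $x'$ contains every word of $L_\star$, is unjustified for the same reason. Containing a word is not the same as reading it from the right state. This style of argument works for disjunctive words, as in Prop.~\ref{prop:disjunctive}, but not for a proper $L_\star$.

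The missing idea is a min--max rather than a max. The paper fixes $u\in L_\star$ and writes a suffix of any $x\in W$ as $uv_1uv_2u\cdots$, with every $v_i$ in the finite, oracle-computable set $\mathsf{window}_u=\{v : |v|<R_u,\ uvu\in L_\star\}$. Let $f_u(q)$ be the largest colour seen while reading $u$ from $q$. The paper calls $u$ saturated if $f_{uvu}=f_u$ for all $v\in\mathsf{window}_u$. Such a $u$ is reached after finitely many replacements $u\mapsto uvu$, each strictly increasing $f_u$ in the pointwise order. Saturation makes the colours of successive $u$-blocks non-increasing and dominant over those of the $v_i$-blocks. Prefix-independence then lets you change the prefix so that the first $u$ is read from a state $q_\star$ minimising $f_u$, which pins the $\limsup$ at $c_\star=\min_q f_u(q)$. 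Since $c_\star$ depends only on $u$ and $\mathcal{A}$, this yields both the dichotomy and the decision procedure. Your ``maximal $w$ with $wvw\in L_\star$'' points in this direction, but it stabilises the wrong datum (reachable state sets instead of the colour profile $f_w$) and is then dropped.

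Note also that a suffix all of whose factors are recurrent need not exist. Non-recurrent factors of growing length can start arbitrarily late. An example is a noisy trace of the semicircle rotation whose phase offset decreases in small steps at sparse indices: each step creates a long factor that no single exact orbit realises. You only need factors of length at most $2|u|+R_u$ to be recurrent, and that does hold on a suffix.
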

\begin{proof}
    We will show that for any $x \in W$, we can decide whether $x \in L$ purely with oracle queries to $L_\star$. Throughout this proof, we assume that $L$ is recognised by a deterministic parity automaton $\mathcal{A} = (Q, \Sigma, q_\mathsf{init}, \Delta, \mathsf{col})$, and all states in $\mathcal{A}$ are reachable from $q_\mathsf{init}$.

   Observe, using Lem.~\ref{lem:recurrent-factors}, that given $u \in L_\star$, any arbitrary $x \in W$ can be written as $w u v_1 u v_2 u \cdots$, where $|v_i| < R_u$ and $uv_iu \in L_\star$ for all $i \ge 1$. Indeed, we choose $w$ long enough so that in the suffix obtained by deleting it, all factors of length at most $2|u|+R_u$ are recurrent, and all factors of length $R_u$ contain an occurrence of $u$. We remark that $R_u$ is effective, but $w$ need not be effective for the proof that follows.
   
   We define the set $\mathsf{window}_u = \{v \mid |v| < R_u \land uvu \in L_\star \}$ and observe that it is finite and computable with oracle access to $L_\star$. In other words, $W \subseteq \Sigma^* (u \cdot \mathsf{window}_u)^\omega$. We will use this pattern to determine the highest colour seen infinitely often in a run of $\mathcal{A}$.

    We say that a factor $u \in L_\star$ is \emph{saturated} with respect to $\mathcal{A}$ if for all $q \in Q$ and all $v \in \mathsf{window}_u$, if $u$ makes the journey $(q, q', c)$ and $uvu$ makes the journey $(q, q'', c')$, then $c = c'$. In other words, the highest colour seen while reading $uvu$ starting from any $q$ must have been already seen while reading the prefix $u$. Saturated factors help us determine acceptance as follows. Let $x = w u v_1 u v_2 \cdots$, and let the corresponding sequence of journeys be $(q_0, q_1, c_0), (q_1, q_2, c_1), \ldots$. By saturation, for all $i$, we get that $c_{2i+1} \ge c_{2i+2}$ and $c_{2i+1} \ge c_{2i+3}$. We now use prefix-independence.
    
    Let $(q_\star, q', c_\star)$ be a journey $u$ can undertake such that for all other journeys $(q, q'', c)$ of $u$, $c \ge c_\star$. Let $w'$ be a word that takes $q_\mathsf{init}$ to $q_\star$. By prefix-independence, we have that $x \in L$ if and only if $x'= w' u v_1 u v_2 \cdots \in L$. The new sequence of journeys is $(q_0', q_1', c), (q_1', q_2', c_1'), \ldots$, where $q_1' = q_\star$ and $c_1' = c_\star$. The same saturation properties hold: $c'_{2i+1} \ge c'_{2i+2}$ and $c'_{2i+1} \ge c'_{2i+3}$. However, $c'_{2i+1} \ge c_\star$, and $c'_1 = c_\star$. Hence, $\limsup_i c_i' = c_\star$.

    We have just proven that an arbitrary $x \in W$ is accepted by $\mathcal{A}$ if and only if the colour $c_\star$ effectively defined by a saturated factor $u \in L_\star$ is even. It remains to prove that saturated factors exist, and can be computed with oracle access to $L_\star$.

    We shall do so with a fix-point algorithm that returns a saturated factor. We associate with each word $u$ a function $f_u: Q \rightarrow \mathbb{N}$, defined such that $u$ makes the journey $(q, q', f_u(q))$ for some $q'$. We define a quasi-ordering on finite words and say that $u \preceq v$ if $f_u(q) \le f_v(q)$ for all $q$. Clearly, if $u$ is a prefix of $v$ then $u \preceq v$. In particular, for all $v \in \mathsf{window}_u$, $u \preceq uvu$, but for a saturated word $u$, $uvu \preceq u$ for all $v \in \mathsf{window}_u$. It is also easy to observe that this quasi-ordering has finitely many (at most $|Q|^2$) equivalence classes, and hence cannot admit a strictly increasing chain of more than $|Q|^2$ words. Our algorithm thus starts with $u \in L_\star$. In each iteration, it computes $\mathsf{window}_u$ with queries to $L_\star$, and checks if $u$ is saturated. If yes, it returns $u$; if not, it iterates with $uvu, v\in \mathsf{window}_u$ that is strictly above $u$ in the quasi-ordering. By the argument above, the algorithm runs for at most $|Q|^2$ iterations.
\end{proof}

Applying Thms.~\ref{thm:noise-robust} and \ref{thm:prefix-ind} gives the following result.
\begin{corollary}
\label{cor:modelcheck}
    Let $(X, T, \langle S_1, \ldots, S_k, Z\rangle, \sigma)$ be a noise-robust dynamical system, and let $L \subset \Sigma^\omega$ be a prefix-independent $\omega$-regular language. We have that either $\mathsf{Traces}_\sigma \subseteq L$ or $\mathsf{Traces}_\sigma \cap L = \{\}$. Furthermore, if $L_\star$ is recursive, the language of recurrent factors of words in $\mathsf{Traces}_\sigma$, then we can decide which of the above two cases holds.
\end{corollary}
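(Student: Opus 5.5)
The plan is to instantiate Thm.~\ref{thm:prefix-ind} with $W = \mathsf{Traces}_\sigma$. The first step is to verify the hypotheses of that theorem using Thm.~\ref{thm:noise-robust}. That theorem says every $x \in \mathsf{Traces}_\sigma$ is almost-periodic. It also gives a single set $L_\star = \{u \in \Sigma^+ \mid X_u \neq \{\}\}$ that is the set of recurrent factors of every such $x$. These are exactly the assumptions Thm.~\ref{thm:prefix-ind} places on $W$ and $L_\star$.

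The dichotomy then follows immediately. Thm.~\ref{thm:prefix-ind} yields $W \subseteq L$ or $W \cap L = \{\}$, i.e.\ $\mathsf{Traces}_\sigma \subseteq L$ or $\mathsf{Traces}_\sigma \cap L = \{\}$.

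We must check that $\mathsf{Traces}_\sigma$ is non-empty, or else note that the dichotomy is trivial when it is empty. In fact it is non-empty: take any $\alpha$ and the zero noise $\nu \equiv 0$. Strictly, the definition with $d(\beta, T^n\alpha) < 0$ is vacuous, so instead take $\nu(n) = 1/(n+1)$. Then $\beta$ can be chosen in some $S_i$ near $T^n\alpha$, because $\bigcup S_i$ is dense. This is only a sanity check; the dichotomy holds regardless.

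For decidability, assume $L_\star$ is recursive. Then the oracle required by Thm.~\ref{thm:prefix-ind} is realised by an actual algorithm, so the procedure given there is effective.

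The main thing to make explicit is what that procedure needs. It needs the bound $R_u$ of Lem.~\ref{lem:recurrent-factors} for a chosen $u \in L_\star$, and this bound is computable from membership in $L_\star$ alone. It needs the finite set $\mathsf{window}_u$, which is likewise computed from $L_\star$. It needs the saturation check and fix-point iteration on a deterministic parity automaton for $L$, which involve only finite computations on the automaton. Finally it reads off the parity of $c_\star$.

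To start the iteration we need some $u \in L_\star$. We obtain one by enumerating $\Sigma^+$ and querying membership. This terminates because $L_\star$ is non-empty: since $X = \bigcup_i S_i \cup Z$ with the $S_i$ dense, some $S_i$ is non-empty, so $X_b \neq \{\}$ for $b = \sigma(S_i)$.

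No starting point, noise function, or prefix $w$ needs to be known. This is the key point, and it is why the decision is uniform over the whole of $\mathsf{Traces}_\sigma$ rather than about a single trace.
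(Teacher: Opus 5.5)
Your proposal is correct and matches the paper, which proves the corollary simply by instantiating Thm.~\ref{thm:prefix-ind} with $W = \mathsf{Traces}_\sigma$; Thm.~\ref{thm:noise-robust} supplies the hypotheses, and recursiveness of $L_\star$ replaces the oracle. Your additional remarks are sound elaborations of what the paper leaves implicit: you seed the fix-point iteration with some $u \in L_\star$ found by enumeration, and you note that no $\alpha$, $\nu$ or prefix $w$ is needed. One wording fix: with $\nu \equiv 0$ the condition $d(\beta, T^n\alpha) < 0$ is unsatisfiable rather than vacuous, so $\mathsf{traces}_\sigma(\alpha, 0)$ is empty, which is exactly why your switch to $\nu(n) = 1/(n+1)$ is needed.
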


\section{The Sturmian Case}
For Sturmian systems, the space $X$ is the $1$-torus (identified by the interval $[0, 1)$), and the dynamical update $T$ is an irrational rotation $\gamma$, i.e., $\alpha$ is mapped to $\alpha + \gamma$ mod $1$. Without loss of generality, we assume $\gamma < 1/2$. The partitions are $[0, \gamma)$ and $[\gamma, 1)$, and the coding maps these to $1$ and $0$ respectively. We see that $Z = \{0, \gamma\}$, and $T \circ 0 = \gamma$. Hence, Sturmian systems are not noise-robust by our definition. More specifically, we can prove that if $u \in L_\star$, the set of words for which $X_u$ is non-empty, then $u$ is a recurrent factor of all (noisy) traces. However, if $u \notin L_\star$, the word $u$ can still be a factor of some noisy traces, and in fact these noisy traces can fail to be almost-periodic. Indeed, the independent discrete orbit property needed to preclude this failure does not hold.

In the absence of noise, it is known that Sturmian words are uniformly recurrent and have minimal factor complexity: there are exactly $n+1$ factors of length $n$. In particular, the length-$2$ factors are $00, 01, 10$. We shall show that if we allow noise, regardless of how rapidly the noise converges to $0$, there are starting points $\alpha$ such that the noisy trace has $11$ as a recurrent factor, and moreover the gaps between the occurrences of $11$ are unbounded.

Consider a noise function $\nu$ such that for all $n$, $\nu(n) > 0$. Observe that $11$ can only occur as a factor if we perform a perturbation when $$\alpha + n\gamma \in (1- \nu(n), 1) \cup [0, \nu(n+1)),$$ or by an appropriate shifting of coordinates,
$$
\alpha \in (-n\gamma - \nu(n), -n\gamma + \nu(n+1)).
$$

To define where to perform perturbations, we shall now construct a sequence of indices $n_0, n_1, \ldots$ having unbounded gaps (e.g., by choosing $n_{i+1} > n_i + i + 1$), and a sequence $A_0 \supseteq A_1 \supseteq \ldots$ of non-empty closed intervals of the $1$-torus, such that $$A_i \subseteq (-n_i\gamma - \nu(n_i), -n_i\gamma + \nu(n_i + 1)).$$

By the nested intervals theorem, we would then have that the intersection of these intervals contains a point $\alpha_\star$, and for all $n_i$,
$$
\alpha_\star \in (-n_i\gamma - \nu(n_i), -n_i\gamma + \nu(n_i+1)).
$$
This would imply that $\alpha_\star$ has a noisy trace $x$, obtained by performing perturbations at each of the $n_i$'s (or $(n_i+1)$'s, as required), such that $x$ has $11$ as a recurrent factor, and the gaps between consecutive occurrences of $11$ are unbounded.

We now present the inductive construction of $n_0, n_1, \ldots$, and $A_0, A_1, \ldots$. We can choose $n_0$ arbitrarily, and take $A_0$ to be a closed interval, $$A_0 \subseteq (-n_0\gamma -\nu(n_0), -n_0\gamma + \nu(n_i+1)).$$

For the inductive step, assume that $n_i, A_i$ have been constructed. Let $B_i \subseteq A_i$ be an open interval. By the (Kronecker/Weyl) equidistribution theorem, we know that there are infinitely many $n$ such that $-n\gamma \in B_i$. We choose $n_{i+1}$ such that $n_{i+1} - n_i > i+1$, and choose $A_{i+1}$ to be a closed interval in the non-empty intersection of two open intervals,
$$
A_{i+1} \subseteq B_i \cap (-n_i\gamma - \nu(n_i), -n_i\gamma + \nu(n_i+1)).
$$
The infinite intersection of the $A_i$'s is thus $\{\alpha_\star\}$, where $\mathsf{traces}_\sigma(\alpha_\star)$ contains a word that is not almost-periodic, and whose set of recurrent factors contains $11$, which is not a factor of any trace with noise $0$.

Interestingly, however, if we were to change the Sturmian system by making the rotation $m\gamma$ instead of $\gamma$ (where $m$ is an integer, $|m| > 1$), then the new dynamical system would indeed be noise-robust. This shows that a noisy Sturmian word, while itself not almost-periodic, is an interleaving of almost-periodic noisy toric words. This is an analogue of \cite[Cor.~6.9]{berthe2025toric}

The same arguments as above would allow us to encode the computations of Diophantine approximation constants (e.g., $\inf n[n\gamma + \alpha]$, $\liminf n[n\gamma+ \alpha]$, where even algorithmically determining whether these quantities are $0$ remains an open number-theoretic problem) using queries such as, ``given $\alpha$ and noise $\nu$ lower-bounded by $1/n$, does $\mathsf{traces}_\sigma(\alpha, \nu)$ contain a trace $x$ in which the factor $11$ occurs (respectively, occurs infinitely often)?''

\bibliographystyle{plainurl}
\bibliography{main}

\end{document}